\newtheorem{theorem}{Theorem}
\theoremstyle{definition}
\newtheorem{definition}{Definition}
\begin{document}

	\begin{center}
		\Large
	\textbf{Repeated temperature measurements in quantum thermodynamics}
	
		\large 
    \textbf{N.M. Gerasimov}\footnote{Faculty of Physics, Lomonosov Moscow State University,
				Leninskie Gory, Moscow 119991, Russia\\ E-mail:\href{mailto:gerasimov.nm20@physics.msu.ru}{gerasimov.nm20@physics.msu.ru}},
		\textbf{A.E. Teretenkov}\footnote{Department of Mathematical Methods for Quantum Technologies, Steklov Mathematical Institute of Russian Academy of Sciences,
			ul. Gubkina 8, Moscow 119991, Russia\\ E-mail:\href{mailto:taemsu@mail.ru}{taemsu@mail.ru}}
		\end{center}

	\begin{center}
	In this work, we model the temperature measurement as a transformation of the arbitrary state into the Gibbs state. We start with a general formalism of ansatz-posteriors,  which includes many usual models of posterior states due to measurement or state preparation. On the one hand, it contains models of selective and non-selective measurements posteriors as special cases, which allows us to consider it as a generalization of usual measurement models. On the other hand, it contains the above mentioned transformation into the Gibbs state as a special case as well. We derive an analogue of the master equation in the stroboscopic limit of repeated measurements. Then we apply our general approach to  temperature measurement.
	\end{center}

    \section{Introduction}

Temperature measurement is one of the main aims of quantum thermodynamics. The description of temperature, the mechanism of its measurement and invasiveness of thermometry \cite{Albarelli2023Invasiveness} play a key role in the development of quantum technologies.    The idea of probe thermometry is based on the measurement of the parameters of the sample via its interaction with a probe. Thus, the probe plays the role of an open quantum system connected to the sample which is considered the environment. Our goal is to describe the mechanism of repeated temperature measurements. The advantages of sequential measurements were discussed in \cite{DePasquale2017Sequential}.

We  consider temperature measurement similarly to selective von Neumann measurements as transformation of an arbitrary prior state to a posterior state with fixed temperature. And a natural choice for the state with fixed temperature is the Gibbs state. It will differ our generalization from usual von Neumann measurements, for which the states with a fixed result are described by an orthogonal projector.  Remark that we will focus on state dynamics under repeated temperature measurements, but we will not discuss the result probabilities.  Let us just mention that due to the fact that the elements of the convex hull of the Gibbs states admit non-unique decomposition into its extreme this hull, they fit the general probability theory rather than the usual one \cite{Sparaciari2017}.

In Section~\ref{sec:ansatzes} we introduce the general concept of ansatz from \cite{Meretukov2024} starting from its abstract definition. Although this abstract definition is too general for our specific goal, it is very important for making clear many useful analogies. In particular, it allows us to emphasize the similarities between our approach and usual stroboscopic limit of repeated measurements \cite{Luchnikov2017}. It also reveals the analogy of the stroboscopic limit approach in this work and the weak coupling limit approach in  \cite{Meretukov2024} based on Kawasaki-Gunton projectors. And the Kawasaki-Gunton projectors play an important role for derivation of dynamical equations in quantum thermodynamics \cite{Rau1996, Semin2015, Semin2020}.  

In Section~\ref{sec:posteriors} we introduce ansatz-posteriors and discuss how such a concept covers usual measurement and state preparation models. In Section~\ref{sec:strob} we derive an analog of the master equation in the stroboscopic limit of repeated measurements, this is one of the main results of this work. In Section~\ref{sec:tempMeas} we discuss the special case of generailized Gibbs ansatzes. In Section~\ref{eq:example} we consider examples of master equations in the stroboscopic limit of repeated measurements correspondent to  the Gibbs ansatz. Namely, we consider the repeated measurements of qubit and generic multilevel open systems in fixed temperature environments, whose dynamics between measurements is described by Gorini--Kossakowski--Sudarshan--Lindblad (GKSL) equations \cite{Gorini1976, Lindblad1976}. So they can be considered as models of indirect temperature measurements.

In Conclusions we summarize the results of the work and discuss directions of the further development.

	\section{Ansatzes and their parametrization}\label{sec:ansatzes}

In this work we consider the finite dimensional case only. So we recall some definitions valid for arbitrary $C^*$-algebras restricting them to $\mathbb{C}^{d \times d}$ (with usual Hermitian conjugation $^\dagger$ as *-involution).

\begin{definition}
	We call $G$  an operator system in $\mathbb{C}^{d \times d}$ if $G$ is a linear subspace of  $\mathbb{C}^{d \times d}$  such that:
	\begin{enumerate}
		\item $I \in G$.
		\item If $g \in G$, then $g^{\dagger} \in G.$
	\end{enumerate}
\end{definition}

Sometimes operator systems \cite{Choi1977, Paulsen2003, Yashin2022} are also called non-commutative graphs \cite{Duan2012, Amosov2019, Amosov2020, Amosov2021} or quantum graphs \cite{Weaver2021, Brannan2022, Priyanga2023}. But this term was inspired by confusability graph of classical channels, and our usage will be different and not naturally associated with any quantum channel, so we will use more application-neutral term ''operator system''.

\begin{definition}
	We call a continuous linear map $\varepsilon: G \rightarrow \mathbb{C}$  a state on operator system $G$ in $\mathbb{C}^{d \times d}$ if
	\begin{enumerate}
		\item $\varepsilon(g^{\dagger}) = (\varepsilon(g))^*$.
		\item $\varepsilon(I) = 1$.
		\item If $g \geqslant 0$, then $\varepsilon(g) \geqslant 0$.
	\end{enumerate}
\end{definition}

Now we can define an ansatz.
\begin{definition}
	Let $G$ be an operator system in $\mathbb{C}^{d \times d}$. Let $S_{\varepsilon} $ be a family of states on $\mathbb{C}^{d \times d}$   parameterized by states $\varepsilon$ on $G$ such that
	\begin{equation}\label{eq:abstrConsistCond}
		S_{\varepsilon}(g)  = \varepsilon(g)
	\end{equation}
	for all $g \in G$. Then we call $G$  a relevant operator system, we call $S_{\varepsilon}$ an ansatz consistent with relevant observables, and we call  \eqref{eq:abstrConsistCond} the consistency conditions.
\end{definition}

In practice, for the finite-dimensional case  it is natural to use the representation
\begin{equation}\label{eq:normalStatesParam}
	S_{\varepsilon}(X) =  \operatorname{Tr} (X \rho_{ans}( \varepsilon))
\end{equation}
for all $X \in \mathbb{C}^{d \times d}$.

For many open quantum system applications  it is natural to consider an operator system of the form $\mathbb{C}^{d_S \times d_S} \otimes I_B$ in $\mathbb{C}^{d_Bd_S \times d_Bd_S} \simeq \mathbb{C}^{d_S \times d_S} \otimes \mathbb{C}^{d_B \times d_B}$, which is not just  an operator system, but a matrix *-algebra \cite{Klerk2011} *-isomorphic to the full $\mathbb{C}^{d_S \times d_S}$ *-algebra. In this case, it is useful to parameterize the state on $\mathbb{C}^{d_S \times d_S} \otimes I_B \simeq \mathbb{C}^{d_S \times d_S}$ similarly to \eqref{eq:normalStatesParam} as
\begin{equation*}
	\varepsilon(X_S \otimes I_B) =   \operatorname{Tr} (X_S \rho_S),
\end{equation*}
where $\rho_S \in  \mathbb{C}^{d_S \times d_S}$ is a density matrix. Then one can reparameterize $ \rho_{ans}( \rho_S)$ in terms of new parameter $\rho_S$ instead of  $ \rho_{ans}(  \varepsilon)$. The most important ansatz for open quantum systems is
\begin{equation}\label{eq:factorizedAnsatz}
	\rho_{ans}( \rho_S) = \rho_S \otimes \rho_B,
\end{equation} 
where $ \rho_B$ is a fixed density matrix of the reservoir. But for many-body problems the non-linear ansatzes of the form
\begin{equation*}
	\rho_{ans}( \rho_S) = \rho_S \otimes \ldots \otimes \rho_S
\end{equation*} 
occur \cite[Section 3.7]{Breuer2002}. In general for the operator system of the form $\mathbb{C}^{d_S \times d_S} \otimes I_B$ such ansatzes are usually called assignment \cite{Pechukas1994, Alicki1995, Jordan2004, Rodriguez2010}
or recovery maps \cite{Trushechkin2021}.

Another example is a multi-mode fermionic even Gaussian state. Its matrix of second moments \cite{Teretenkov2019} defines a continuous linear positive functional on the linear span of all posible pairwise products of fermionic creation and annihilation operators. This span becomes an operator system by adding operators proportional to identity, and the functional becomes a state on it assuming $\varepsilon(I) = 1$. On the physical level of rigor one can think similarly about covariance matrices and vectors of means for multi-mode bosonic Gaussian states \cite{Meretukov2025}.

Sometimes the parametrization for linear ansatzes is not-explicit and based on projections $\mathcal{P}: \mathbb{C}^{d_S \times d_S} \rightarrow \mathbb{C}^{d_S \times d_S}$ such that $\mathcal{P} \rho$ is a density matrix for any density matrix $\rho$. Then
\begin{equation}\label{eq:projAns}
	\rho_{ans}(\varepsilon) = \mathcal{P} \rho,
\end{equation}
where  $\varepsilon$ is parameterized by the image of $\mathcal{P} \rho$.

For the most of thermodynamical applications it is useful to choose a self-adjoint basis $P_m$, $m = 0,1, \ldots, M$ in $G$ such that $P_0=I$, then the  states on   $G$ can be parameterized by 
\begin{equation*}
	E_m = \varepsilon(P_m), \qquad m =1, \ldots, M,
\end{equation*}
which will arrange in the vector form  
\begin{equation}\label{eq:vectorParam}
	\vec{E} = \varepsilon( \vec{P}).
\end{equation}
We  will also write $ \rho_{ans}( 	\vec{E})$ instead of $\rho_{ans} (\varepsilon)$. The elements of $\vec{P}$ we call the relevant observables. Then our notation will coincides with \cite{Meretukov2024}.  It is useful for thermodynamical applications, so we use this notation in the further sections. Nevertheless, it breaks many useful analogies so we started this section with the abstract coordinate-free definition.  Let us  remark that in physical literature  notation $\langle P_m \rangle$ for $E_m$ is widely used \cite{Zubarev1997, Semin2015, Semin2020}, but it hides the facts that $\langle P_m \rangle$ are parameters of the ansatz rather than just averages of relevant observables.

\section{Ansatz-posteriors as measurement and state preparation models}
\label{sec:posteriors}

There are two archetypical scenarios of ansatze-usage for master equation derivation. The first one is perturbative solution of equations which preserve the  ansatz in the zero-order of perturbation theory. In a very general setup this scenario can be played via generalized Kawasaki-Gunton projectors and was discussed in \cite{Meretukov2024}. The second scenario is that the transformation of a given state to ansatz state can be realized by some experimental intervention. This is the case for measurement procedures, where it models the posterior states arising after measurements, and effectively in collision models \cite{Campbell2021, Ciccarello2022}, where we assume that we can prepare new factorized states with the given system state by physically ''colliding'' our system with new but similar reservoirs. In this work we will focus on this second scenario.

We  model the measurement (or preparation) influence on the system of interest by transformation of the arbitrary prior state to a state from the ansatz family. It can be formalized by the following definition.

\begin{definition}\label{def:posteriors}
	Let $s$ be a state on $\mathbb{C}^{d \times d}$ and $ S_\varepsilon$ be an ansatz  on $\mathbb{C}^{d \times d}$ with relevant operator system $g$, then the ansatz-posterior is $ S_\varepsilon$ for $\varepsilon$ chosen in such a way that 
	\begin{equation*}
		S_\varepsilon(g) = S(g)
	\end{equation*}
	for all $g \in G$.
\end{definition}
Thus, ansatz-posterior is just such a state from the ansatz family which coincides with the prior state for all relevant observables. 

For example, for non-selective measurement of an observable $X = X^{\dagger} \in \mathbb{C}^{d \times d}$ we have
\begin{equation*}
	G = \{g \in \mathbb{C}^{d \times d}: [X, g] = 0\}.
\end{equation*}
If one expands $X$ in the spectral expansion
\begin{equation*}
	X = \sum_{x} x \Pi_x, \qquad \Pi_x^{\dagger} = \Pi_x, \qquad \Pi_x \Pi_{x'} = \delta_{x,x'} \Pi_x, 
\end{equation*}
where $x$ are eigenvalues of $X $ and $\Pi_x$ are  projectors on eigenspaces correspondent to them. Then
\begin{equation*}
	G = \bigoplus\limits_x \mathbb{C}^{d_x \times d_x}, 
\end{equation*}
where $d_x$ is dimension of the  eigenspace correspondent to $x$. The states on $G$ can be parameterized by the matrices
\begin{equation*}
	\sigma_x \in \mathbb{C}^{d_{x} \times d_{x}},  \qquad \sigma_x = \sigma_x^\dagger \geqslant 0, \qquad \sum_x \operatorname{Tr}\sigma_x  = 1.
\end{equation*}
For $g = \oplus_x g_x$ we have
\begin{equation*}
	\varepsilon(g) = \sum_x \operatorname{Tr} g_x \sigma_x.
\end{equation*}
Then the ansatz in terms of parameters $ \sigma_x$ takes the form 
\begin{equation*}
	\rho_{ans}(\{\sigma_x\}) = \bigoplus\limits_x \sigma_x.
\end{equation*}
The ansatz-posterior for this  ansatz is the posterior state for a non-selective measurement.

For a selective measurement of $X$ with the result $x$ we have
\begin{equation*}
	G= \mathbb{C}^{d_{x} \times d_{x}} \oplus \mathbb{C} I_{d - d_{x}}.
\end{equation*}
The states on $G$ can be parameterized by the matrices
\begin{equation*}
	\sigma_x \in \mathbb{C}^{d_{x} \times d_{x}},  \qquad \sigma_x = \sigma_x^\dagger \geqslant 0,  \qquad \operatorname{Tr} \sigma_x \leqslant 1.
\end{equation*}
For $g = g_x \oplus \lambda I_{d - d_{x}}$ we have
\begin{equation*}
	\varepsilon(g) = \operatorname{Tr} \sigma_x g_x + \lambda (1 - \operatorname{Tr} \sigma_x)
\end{equation*}
and the ansatz in terms of parameter $ \sigma_x$ takes the form 
\begin{equation*}
	\rho_{ans}(\sigma_x) = \frac{\sigma_x}{\operatorname{Tr} \sigma_x} \oplus 0 I_{d - d_{x}}.
\end{equation*}

For the collision models for each collision the system and a new bath are prepared in the factorized state. Despite the fact that formally at each collision the bath, but if interaction between the  system and the bath during the evolution between collisions is assumed the same for each bath, then effectively it is prepared in the ansatz-posterior state for ansatz \eqref{eq:factorizedAnsatz}.

For the coordinate form of the ansatz parametrization \eqref{eq:vectorParam} we can give a definition equivalent to Def.~\ref{def:posteriors}, which is more useful for thermodynamic applications and master equation derivation for non-linear  ansatzes. 
\begin{definition}
	Let $\rho \in \mathbb{C}^{d \times d}$ be a density matrix and $\rho_{ans}(\vec{E})$ be an ansatz in $ \mathbb{C}^{d \times d}$ consistent with relevant observables $\vec{P} $. Then we call the density matrix
	\begin{equation}\label{eq:ansatzPosterior}
		\rho_{ans}(\vec{E})|_{\vec{E} =  \operatorname{Tr} \vec{P} \rho}
	\end{equation}
	an ansatz-posterior.
\end{definition}

\section{Stroboscopic limit for general ansatzes}
\label{sec:strob}

In this section we explain the setup of repeated measurements in the stroboscopic limit \cite{Luchnikov2017} for general ansatzes. 
Let us define repeated ansatz-posterior dynamics. It consists of two stages, which are repeated again and again:
\begin{enumerate}
	\item Instant transformation into the ansatz-posterior defined by Eq. \eqref{eq:ansatzPosterior}.
	\item Time evolution during the time $\Delta t$ with coupling $\lambda$. It is defined by a semigroup $e^{\lambda \mathcal{L} \Delta t}$ with some generator $ \mathcal{L} $ (in our examples we will assume that it is a GKSL generator).
\end{enumerate}
If  at some time-moment  the instant transformation into the ansatz-posterior had the parameter of ansatz $\vec{E} - \frac12 \Delta \vec{E}$, then after time $\Delta t$ and subsequent instant transformation into the ansatz-posterior we have the following  parameter of the ansatz
\begin{equation}\label{eq:strStep}
	\vec{E}(t) +  \frac12 \Delta \vec{E}(t) = 	 \operatorname{Tr} \vec{P} e^{\lambda \Delta t \mathcal{L}} \rho_{ans} \left(\vec{E}(t) - \frac12 \Delta \vec{E}(t) \right).
\end{equation}
In the following we will omit the time-dependence.

The stroboscopic limit for general ansatzes is 
\begin{equation}\label{eq:strobLimit}
	\lambda^2 \Delta t = \alpha = \text{fixed}, \qquad \Delta t \rightarrow 0.
\end{equation}

\begin{theorem}
	Let us assume that $\rho_{ans} (\vec{E}) $ is a continuously differentiable function of $\vec{E}$. Then in the stroboscopic limit \eqref{eq:strobLimit} the function $\vec{E}(t)$ satisfies the equation
	\begin{equation}\label{eq:secondOrderEq}
		\frac{d \vec{E}}{d t} = \lambda \langle  \vec{A} \rangle_{\vec{E}}    +  \frac12 \alpha \left(\langle  \vec{B} \rangle_{\vec{E}}   -     \left( \langle  \vec{A} \rangle_{\vec{E}}   , \frac{\partial}{\partial \vec{E}} \right) \langle  \vec{A} \rangle_{\vec{E}}  \right),
	\end{equation}
	where the terms $o(\Delta t)$ are omitted and
	\begin{equation}\label{eq:adjActDef}
		\mathcal{L}^*(\vec{P} ) \equiv \vec{A} , \qquad 	\mathcal{L}^*(\vec{A} ) \equiv \vec{B}, \qquad \langle \; \cdot \; \rangle_{\vec{E}} \equiv  \operatorname{Tr} (\; \cdot \; \rho_{ans}(\vec{E})),
	\end{equation}
	$\mathcal{L}^*$ is the adjoint projector (with respect to Frobenius scalar product) calculated element-wisely on the vector of matrices in its argument. The scalar product $  \left( \langle  \vec{A} \rangle_{\vec{E}}   , \frac{\partial}{\partial \vec{E}} \right)$ is understood as $ \left( \langle  \vec{A} \rangle_{\vec{E}}   , \frac{\partial}{\partial \vec{E}} \right) \equiv \sum_{j=1}^M\langle  A_j \rangle_{\vec{E}}  \frac{\partial}{\partial E_j}$.
\end{theorem}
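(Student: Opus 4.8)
The plan is to Taylor-expand the one-step recursion \eqref{eq:strStep} in powers of $\Delta t$, retain everything up to order $\Delta t$ while discarding $o(\Delta t)$, and then identify $\Delta \vec{E}/\Delta t$ with $d\vec{E}/dt$ in the limit. The two structural facts I would lean on are the coordinate form of the consistency condition, $\operatorname{Tr}\vec{P}\,\rho_{ans}(\vec{F}) = \vec{F}$ for every admissible parameter $\vec{F}$, and the adjoint relations $\operatorname{Tr}(\vec{P}\,\mathcal{L}(\rho)) = \operatorname{Tr}(\vec{A}\rho)$ and $\operatorname{Tr}(\vec{P}\,\mathcal{L}^2(\rho)) = \operatorname{Tr}(\vec{B}\rho)$ from \eqref{eq:adjActDef}.

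First I would fix the bookkeeping by noting that under \eqref{eq:strobLimit} one has $\lambda \Delta t = \sqrt{\alpha \Delta t} = O(\sqrt{\Delta t})$ and $\lambda^2 \Delta t^2 = \alpha \Delta t = O(\Delta t)$, so in the semigroup expansion $e^{\lambda \Delta t \mathcal{L}} = I + \lambda \Delta t\, \mathcal{L} + \frac12 \lambda^2 \Delta t^2 \mathcal{L}^2 + \ldots$ only the first two nontrivial terms survive to order $\Delta t$, the $\mathcal{L}^3$ term being $O(\Delta t^{3/2}) = o(\Delta t)$. The adjoint relations turn these into $\langle \vec{A}\rangle$ and $\langle \vec{B}\rangle$ evaluated at the shifted parameter $\vec{E} - \frac12\Delta\vec{E}$, while the identity term contributes, by the consistency condition, \emph{exactly} $\vec{E} - \frac12\Delta\vec{E}$. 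This cancels the same quantity implicit on the left-hand side of \eqref{eq:strStep}, leaving $\Delta\vec{E}$ sourced purely by the dissipative terms.

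Next I would Taylor-expand the two surviving terms about $\vec{E}$, using the assumed continuous differentiability of $\rho_{ans}$. The $\langle \vec{B}\rangle$ term already carries the prefactor $\frac12\alpha\Delta t$, so its shift correction is $o(\Delta t)$ and it may be evaluated directly at $\vec{E}$. The $\langle \vec{A}\rangle$ term, however, carries the prefactor $\lambda \Delta t = O(\sqrt{\Delta t})$, so I must retain its first-order shift $-\frac12 \left(\Delta\vec{E}, \frac{\partial}{\partial \vec{E}}\right)\langle \vec{A}\rangle_{\vec{E}}$; the key move here is to feed back the leading-order solution $\Delta\vec{E} = \lambda \Delta t\, \langle \vec{A}\rangle_{\vec{E}} + O(\Delta t)$ self-consistently, which converts this cross term into $-\frac12 \lambda^2 \Delta t^2 \left(\langle \vec{A}\rangle_{\vec{E}}, \frac{\partial}{\partial \vec{E}}\right)\langle \vec{A}\rangle_{\vec{E}} = -\frac12\alpha\Delta t\left(\langle \vec{A}\rangle_{\vec{E}}, \frac{\partial}{\partial \vec{E}}\right)\langle \vec{A}\rangle_{\vec{E}}$ to the required order. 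Collecting the three surviving contributions and dividing by $\Delta t$ then yields \eqref{eq:secondOrderEq}.

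The main obstacle is exactly this order counting: because the natural increment $\Delta\vec{E}$ is $O(\sqrt{\Delta t})$ rather than $O(\Delta t)$, one must recognize that the cross term between the symmetric half-step shift and the first-order drift enters at the \emph{same} order $\alpha\Delta t$ as the genuine second-order drift $\langle \vec{B}\rangle$. Omitting the self-consistent back-substitution of the leading-order $\Delta\vec{E}$ would lose the $\left(\langle \vec{A}\rangle_{\vec{E}}, \frac{\partial}{\partial \vec{E}}\right)\langle \vec{A}\rangle_{\vec{E}}$ term altogether. I would also verify that the $O(\Delta\vec{E}^2)$ Taylor remainders are genuinely $o(\Delta t)$ after multiplication by their prefactors, and note that it is precisely the symmetric placement of $\pm\frac12\Delta\vec{E}$ that makes the zeroth-order contributions cancel cleanly and isolates the dissipative source.
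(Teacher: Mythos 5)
Your proposal is correct and follows essentially the same route as the paper: expand the one-step recursion \eqref{eq:strStep} to order $\Delta t$ using the consistency condition for the identity term and a first-order Taylor expansion of the ansatz, then resolve the implicit dependence on $\Delta \vec{E}$ perturbatively, which is exactly how the cross term $-\frac12\alpha\left(\langle \vec{A}\rangle_{\vec{E}},\frac{\partial}{\partial \vec{E}}\right)\langle \vec{A}\rangle_{\vec{E}}$ arises in the paper as well. The only cosmetic difference is that where you back-substitute the leading-order solution $\Delta\vec{E} = \lambda\Delta t\,\langle \vec{A}\rangle_{\vec{E}} + O(\Delta t)$ into the cross term, the paper collects the $\Delta\vec{E}/\Delta t$ terms into the linear operator $I + \frac12\frac{\alpha}{\lambda}\operatorname{Tr}\vec{P}\mathcal{L}\left(\,\cdot\,,\frac{\partial\rho_{ans}(\vec{E})}{\partial\vec{E}}\right)$ and inverts it by a Neumann series in $1/\lambda$; the two resolutions coincide at the relevant order.
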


	\begin{proof}
	Let us expand the right-hand side of Eq.~\eqref{eq:strStep} as
	\begin{align*}
		\operatorname{Tr} \vec{P} e^{\lambda \Delta t \mathcal{L}} \rho_{ans} \left(\vec{E} - \frac12 \Delta \vec{E} \right) =  \operatorname{Tr} \vec{P} \left(1 + \Delta t \lambda \mathcal{L} +  \frac12 \lambda^2  \Delta t^2 \mathcal{L}^2\right) \rho_{ans} \left(\vec{E} - \frac12 \Delta \vec{E} \right) + o(\Delta t^2)\\
		= \vec{E} - \frac12 \Delta \vec{E} +   \Delta t \lambda  \operatorname{Tr} \vec{P} \mathcal{L} \rho_{ans}(\vec{E}) -   \frac12 \Delta t \lambda   \operatorname{Tr} \vec{P} \mathcal{L}  \rho_{ans}\left( \Delta \vec{E}, \frac{\partial \rho_{ans} (\vec{E})}{\partial \vec{E}} \right) \\
		+ \frac12 \lambda^2  \Delta t^2  \operatorname{Tr} \vec{P} \mathcal{L}^2 \rho_{ans} (\vec{E})  + o(\Delta t^2 + \Delta E^2).
	\end{align*}
	By equating with the left-hand side of 	Eq.~\eqref{eq:strStep} and taking into account Eq.~\eqref{eq:strobLimit} we have
	\begin{equation*}
		\frac{\Delta \vec{E}}{\Delta t}  =  \lambda  \operatorname{Tr} \vec{P} \mathcal{L} \rho_{ans}(\vec{E}) -  \frac12 \frac{\alpha}{\lambda}   \operatorname{Tr} \vec{P} \mathcal{L} \left( \frac{\Delta \vec{E}}{\Delta t }, \frac{\partial \rho_{ans} (\vec{E})}{\partial \vec{E}} \right) +  \frac12 \alpha \operatorname{Tr} \vec{P} \mathcal{L}^2 \rho_{ans} (\vec{E})  + o(\Delta t).
	\end{equation*}
	Let us collect terms with $\frac{\Delta \vec{E}}{\Delta t}$ as
	
	\begin{equation*}
		\left(	I +  \frac12 \frac{\alpha}{\lambda}   \operatorname{Tr} \vec{P} \mathcal{L} \left( \; \cdot \;, \frac{\partial \rho_{ans} (\vec{E})}{\partial \vec{E}} \right)\right)\frac{\Delta \vec{E}}{\Delta t}  =  \lambda  \operatorname{Tr} \vec{P} \mathcal{L} \rho_{ans}(\vec{E})   +  \frac12 \alpha \operatorname{Tr} \vec{P} \mathcal{L}^2 \rho_{ans} (\vec{E})   + o(\Delta t).
	\end{equation*}
	Here $I$ is an $M \times M$ identity matrix acting on vectors of parameters and the next term near at the left-hand contains an $M \times M$ matrix which is defined for any $ \vec{X} \in \mathbb{C}^M $ as
	\begin{equation*}
		\operatorname{Tr} \vec{P} \mathcal{L} \left( \; \cdot \;, \frac{\partial \rho_{ans} (\vec{E})}{\partial \vec{E}} \right) \vec{X} =  \operatorname{Tr} \vec{P} \mathcal{L} \left( \vec{X}, \frac{\partial \rho_{ans} (\vec{E})}{\partial \vec{E}} \right).
	\end{equation*}
	As for $\lambda \rightarrow \infty$
	\begin{equation*}
		\left(	I +  \frac12 \frac{\alpha}{\lambda}   \operatorname{Tr} \vec{P} \mathcal{L} \left( \; \cdot \;, \frac{\partial \rho_{ans} (\vec{E})}{\partial \vec{E}} \right)\right)^{-1} =  	I -  \frac12 \frac{\alpha}{\lambda}   \operatorname{Tr} \vec{P} \mathcal{L} \left( \; \cdot \;, \frac{\partial \rho_{ans} (\vec{E})}{\partial \vec{E}} \right) + o(\lambda^{-1}),
	\end{equation*}
	then we have
	\begin{equation*}
		\frac{\Delta \vec{E}}{\Delta t} = \lambda  \operatorname{Tr} \vec{P} \mathcal{L} \rho_{ans}(\vec{E})   +  \frac12 \alpha \operatorname{Tr} \vec{P} \mathcal{L}^2 \rho_{ans} (\vec{E})  -  \frac12 \alpha   \operatorname{Tr} \vec{P} \mathcal{L} \left(  \operatorname{Tr} (\vec{P} \mathcal{L} \rho_{ans}(\vec{E})) , \frac{\partial \rho_{ans} (\vec{E})}{\partial \vec{E}} \right) + o(\Delta t).
	\end{equation*}
	Taking the limit $ \Delta t \rightarrow 0 $ and the definition of adjoint $ \operatorname{Tr} P \mathcal{L}  Y =  \operatorname{Tr} \mathcal{L}^*(P)  Y$ for any $Y \in \mathbb{C}^{d \times d}$ and any $P = P^{\dagger} \in \mathbb{C}^{d \times d}$ we have
	\begin{equation*}
		\frac{d \vec{E}}{d t} = \lambda  \operatorname{Tr}  \mathcal{L}^*(\vec{P}) \rho_{ans}(\vec{E})   +  \frac12 \alpha \operatorname{Tr}( \mathcal{L}^*)^2 ( \vec{P})\rho_{ans} (\vec{E})  -  \frac12 \alpha   \operatorname{Tr}  \mathcal{L}^*(\vec{P})  \left(  \operatorname{Tr} ( \mathcal{L}^*(\vec{P}) \rho_{ans}(\vec{E})) , \frac{\partial \rho_{ans} (\vec{E})}{\partial \vec{E}} \right).
	\end{equation*}
	As $ \operatorname{Tr} ( \mathcal{L}^*(\vec{P}) \rho_{ans}(\vec{E}))$ is a vector of complex numbers rather than the vector of matrices and its components are $\vec{E}$-independent, then
	\begin{equation*}
		\operatorname{Tr}  \mathcal{L}^*(\vec{P})  \left(  \operatorname{Tr} ( \mathcal{L}^*(\vec{P}) \rho_{ans}(\vec{E})) , \frac{\partial \rho_{ans} (\vec{E})}{\partial \vec{E}} \right) =     \operatorname{Tr} \left( \mathcal{L}^*(\vec{P}) \rho_{ans}(\vec{E})) , \left(  \frac{\partial }{\partial \vec{E}} \right)   \operatorname{Tr} ( \mathcal{L}^*(\vec{P}) \rho_{ans}(\vec{E}) \right).
	\end{equation*}
	
	Then taking into account \eqref{eq:adjActDef} we obtain \eqref{eq:secondOrderEq}.
\end{proof}

Let us remark that in \cite{Teretenkov2024a, Teretenkov2024b} a wide classes of many-body GKSL-generators were found which leave the invariant  spaces of the Hamiltionian dynamics still invariant in the presence of dissipation. In particular it is the case for integrals of motion of Hamiltionian dynamics for such classes, which is important for thermodynamical applications.

So let us consider the special case of invariant spaces, i.e.
\begin{equation}\label{eq:invariantCond}
	\mathcal{L}^*(\vec{P} ) = L \vec{P}, 
\end{equation}
where $L \in \mathbb{C}^{M \times M}$ acting on $\vec{P}$ as  on $M$-denominational vector, e.g. 
\begin{equation*}
	\mathcal{L}^*(P_m) = \sum_{k=1}^M L_{mk} P_k.
\end{equation*}
In the coordinate-free form it means that $\mathcal{L}^*$ and, hence, the Heisenberg dynamics leaves the operator system $G$ invariant
\begin{equation*}
	\mathcal{L}^*(G) \subseteq G, \qquad e^{\mathcal{L}^* t}(G) \subseteq G.
\end{equation*}
From \eqref{eq:invariantCond} we have
\begin{equation*}
	\mathcal{L}^*(\vec{A} ) = L^2 \vec{P}.
\end{equation*}

Then Eq.~\eqref{eq:secondOrderEq} takes the form
\begin{equation}\label{eq:secondOrderEqInvariant}
	\frac{d \vec{E}}{d t} = \lambda   L \langle  \vec{P} \rangle_{\vec{E}}    +  \frac12 \alpha \left( L^2 \langle \vec{P} \rangle_{\vec{E}}   -     \left(  L\langle  \vec{P} \rangle_{\vec{E}}   , \frac{\partial}{\partial \vec{E}} \right)  L \langle  \vec{P} \rangle_{\vec{E}}  \right).
\end{equation}
But from Eq.~\eqref{eq:vectorParam} we have
\begin{equation*}
	\langle  \vec{P} \rangle_{\vec{E}} = \vec{E},
\end{equation*}
and then taking into account
\begin{equation*}
	\left(  L \vec{E}   , \frac{\partial}{\partial \vec{E}} \right)  L\vec{E} = L^2  \vec{E} 
\end{equation*}
Eq.~\eqref{eq:secondOrderEqInvariant} reduces just to
\begin{equation}\label{eq:zeroSecTerm}
	\frac{d \vec{E}}{d t} = \lambda   L \vec{E}.
\end{equation}

Thus, \eqref{eq:secondOrderEq} can be considered as perturbative correction to the dynamics with invariant subspace.

Another special case is the linear ansatz one. If $ \rho_{ans}(\vec{E})$ is linear in $\vec{E}$, then $\langle  \vec{A} \rangle_{\vec{E}}$ is linear in $\vec{E}$ as well, then 
\begin{equation*}
	\left( \langle  \vec{A} \rangle_{\vec{E}}   , \frac{\partial}{\partial \vec{E}} \right) \langle  \vec{A} \rangle_{\vec{E}} = \langle  \vec{A} \rangle_{\langle  \vec{A} \rangle_{\vec{E}}}.
\end{equation*}
and Eq.~\eqref{eq:secondOrderEq} takes the form
\begin{equation}\label{eq:linearSecOrdEq}
	\frac{d \vec{E}}{d t} = \lambda \langle  \vec{A} \rangle_{\vec{E}}    +  \frac12 \alpha \left(\langle  \vec{B} \rangle_{\vec{E}}   -     \langle  \vec{A} \rangle_{\langle  \vec{A} \rangle_{\vec{E}}} \right),
\end{equation}
and as $\lambda \langle  \vec{A} \rangle_{\vec{E}}$ and $\lambda \langle  \vec{B} \rangle_{\vec{E}}$ are linear functions in  $\vec{E}$, it is just a linear ordinary differential equation for $\vec{E}(t)$.

For parmeterization \eqref{eq:projAns} of linear ansatzes in terms of projectors  Eq.~\eqref{eq:linearSecOrdEq} takes the form
\begin{equation*}
	\frac{d }{d t} \mathcal{P} \rho(t) = \lambda \mathcal{P} \mathcal{L} \mathcal{P} \rho(t)    +  \frac12 \alpha \left(\mathcal{P}  \mathcal{L}^2\mathcal{P} \rho(t)   -   \mathcal{P} \mathcal{L} \mathcal{P} \mathcal{L}\mathcal{P} \rho(t) \right).
\end{equation*}
In this form it is fully similar to projection-based form of superoperator master equation in the stroboscopic limit \cite{Teretenkov2024, Teretenkov2024c}.

	\section{Generalized temperature measurement}
\label{sec:tempMeas}

The main aim of this work is to apply a general framework to quantum thermodynamics and especially to thermometry. So we start from some general setup which covers many quantum thermodynamical situations. 

Namely, we consider the generalized Gibbs ansatzes of the form
\begin{equation}\label{eq:GibbsFamily}
	\rho_{Gibbs}(\vec{\beta}) = \frac{e^{- (\vec{\beta}, \vec{P})}}{Z(\vec{\beta})}, \qquad Z(\vec{\beta}) \equiv \operatorname{Tr} e^{- (\vec{\beta}, \vec{P})}.
\end{equation}
Here $\vec{P}$ is the vector of relevant observables, which need not be commutative in general \cite{Hinds2018}. This ansatz is parameterized by $\vec{\beta}$, which are natural to be called generalized inverse temperatures. But this parametrization is not consistent with relevant observables, i.e. $\operatorname{Tr} \vec{P} \rho_{Gibbs}(\vec{\beta}) \neq \vec{\beta}$. Thus, to fit it in our approach we need to solve the equation
\begin{equation}\label{eq:consistTransform}
	\operatorname{Tr} \vec{P} \rho_{Gibbs}(\vec{\beta}) = \vec{E}
\end{equation}
with respect to a $\vec{\beta}$ for given $\vec{E}$. Its solution $\vec{\beta}(\vec{E})$ defines the ansatz
\begin{equation}\label{eq:GibbsFamilyReparam}
	\rho_{ans}(\vec{E}) = \rho_{Gibbs}(\vec{\beta}(\vec{E})),
\end{equation}
which satisfies the consistency conditions and, hence,  Eq.~\eqref{eq:secondOrderEq} is applicable to it.

Let us remark that taking into account \eqref{eq:GibbsFamily} one can rewrite Eq.~\eqref{eq:consistTransform} as
\begin{equation*}
	\vec{E}=-\frac{\partial}{\partial \vec{\beta}} \ln Z(\vec{\beta}),
\end{equation*}
which makes the problem of reparametrization of \eqref{eq:GibbsFamily} essentially the problem of equilibrium thermodynamics.

Now we focus on canonical Gibbs state for which Eq.~\eqref{eq:GibbsFamily} reduces to
\begin{equation}\label{eq:GibbsAns}
	\rho_{Gibbs}(\beta) = \frac{e^{- \beta H}}{Z(\beta)}.
\end{equation}
In terms of Eq.~\eqref{eq:GibbsFamily} here the vector $\vec{\beta}$ has only one component $\beta$ and  the vector $\vec{P}$ has only one component $H$. Then Eq.~\eqref{eq:secondOrderEq} takes the form
\begin{equation}\label{eq:scalarSecOrd}
	\frac{d E}{d t} = \lambda \langle  A \rangle_{E}    +  \frac12 \alpha \left(\langle  B \rangle_{E}   -   \frac12   \frac{\partial}{\partial E} \langle  A \rangle_{E}^2  \right).
\end{equation}

As \eqref{eq:GibbsAns} is parameterized in terms of temperature and, moreover, we are interested in results of temperature measurement, then it can be useful to  rewrite this equation directly in terms of temperature:
\begin{equation}\label{eq:invTempSecOrd}
	\frac{d \beta}{d t}  = \lambda \langle  A \rangle_{E(\beta)}  \frac{\partial \beta}{\partial E}  +  \frac12 \alpha \left(\langle  B \rangle_{E(\beta)}   \frac{\partial \beta}{\partial E} -   \frac12  \left(\frac{\partial \beta}{\partial E} \right)^2  \frac{\partial}{\partial \beta} \langle  A \rangle_{E(\beta)}^2  \right).
\end{equation}
One can also express the derivative 
\begin{equation*}
	\frac{\partial \beta}{\partial E} = - \frac{\beta^2}{C(\beta)}
\end{equation*}
in terms of the heat capacity
\begin{equation*}
	C(\beta) = \frac{\partial E}{\partial( \beta^{-1})},
\end{equation*}
which can be regarded as a ''standard'' function for equilibrium thermodynamics, and it can be useful to use the known results from the literature if one applies this equation to quantum many-body systems.

\section{Examples: Qubit and generic multi-level system as a probe thermometer}
\label{eq:example}

As a first example let us describe the open two-level system in the external field as a thermometer for its environment. We will assume that its dynamics is described by GKSL equation for resonant fluorescence \cite[Sec.~3.4.5]{Breuer2002} with the generator
\begin{align*}
	\mathcal{L}(\rho) =&  -i \left[  ( \omega_0 + \Delta \omega) \sigma_+\sigma_- -  \Omega \left( \sigma_++\sigma_-\right), \rho \right] \\
	&+  \gamma \left( \sigma_- \rho\sigma_+ - \frac12 \{ \sigma_+\sigma_-, \rho \}+e^{-\beta_0\omega_0} \left( \sigma_+ \rho \sigma_- - \{ \sigma_+\sigma_-, \rho \} \right) \right),
\end{align*}
where $\omega_0$ is the transition frequency of the two-level system, $\Delta\omega$ is the Lamb shift, $ \Omega$ is the Rabi frequency, $\beta_0$ is the inverse temperature, $\gamma > 0$ is the parameter describing the dissipation rate. Let us remark that we have isolated in the equation only the dependence on $\beta_0$, which is defined by detailed balance  and is universal for the arbitrary reservoir in the equilibrium state, but $\gamma$  still can be a function of $\beta_0$ which depends on a specific model of the reservoir. In particular, for the bosonic reservoir one has
\begin{equation*}
	\gamma = \frac{e^{\beta_0 \omega_0}}{e^{\beta_0 \omega_0} - 1} \gamma_0,
\end{equation*}
where $ \gamma_0$ is a temperature-independent constant. Remark that in this generator we assume that the external field is undepleted, but the depletion can be taken into account by the method developed in \cite{Karasev2024}.

We assume that the repeated  measurement of the two-level system is performed, so it can be considered as a probe thermometer. We consider the Gibbs ansatz \eqref{eq:GibbsAns} with
\begin{equation*}
	H = \omega_0 \sigma_+ \sigma_-.
\end{equation*}

From Eq.~\eqref{eq:consistTransform} $E$ can be written as:
\begin{equation*}
	E = \frac{\omega_0  e^{-\beta \omega_0 }}{e^{-\beta \omega_0 }+1},
\end{equation*}
which gives
\begin{equation}\label{eq:betaE}
	\beta(E) = -\frac{1}{\omega_0}\ln{\frac{E}{\omega_0  -E}},
\end{equation}
then Eq.~\eqref{eq:GibbsFamilyReparam} takes the form
\begin{equation}\label{eq:linearAnsatz}
	\rho_{ans}(E) = \frac{E}{\omega_0} \sigma_+ \sigma_- + \left( 1- \frac{E}{\omega_0} \right) \sigma_- \sigma_+.
\end{equation}
In this case $\langle  A \rangle_{E}$ and $\langle  B\rangle_{E}$ from Eq.~\eqref{eq:scalarSecOrd} take the form:
\begin{align*}
	\langle  A \rangle_{E} &= - \gamma (1 + e^{- \beta_0 \omega_0}) E + \gamma  \omega_0 e^{- \beta_0 \omega_0}, 
	\\
	\langle  B \rangle_{E} &= \gamma ^2 e^{-2 \beta _0 \omega _0} \left(e^{\beta _0 \omega _0}+1\right) \left(E e^{\beta _0 \omega _0}+E-\omega _0\right)+2 \Omega ^2 \left(\omega _0-2
	E\right).
\end{align*}
Thus, Eq.~\eqref{eq:GibbsFamilyReparam} takes the form
\begin{equation}\label{eq:qubitSecOrd}
	\frac{d E}{d t} = - \gamma (1 + e^{- \beta_0 \omega_0}) E + \gamma  \omega_0 e^{- \beta_0 \omega_0} + 2 \Omega^2\Delta t (\omega _0-2
	E),
\end{equation}
where we have omitted $\lambda$ absorbing it into the constants of the generator, and $\alpha$ becomes just the time between measurements $\Delta t$.   Remark that the last term becomes zero in the case $\Omega = 0$. This is due to the fact that the linear span of the identity matrix $I$   Hamiltonian $H$ is invariant space for the adjoint GKSL generator in this case and, hence, fits Eq.~\eqref{eq:zeroSecTerm} after linear change of the relevant observable $H$.

The solution of Eq.~\eqref{eq:qubitSecOrd} has the form
\begin{equation}\label{eq:Et}
	E(t) = e^{- \left(\gamma  \left(1 + e^{-\beta _0 \omega _0}\right)+4 \Omega^2 \Delta t  \right) t} (E(0) - E_{st}) +  E_{st},
\end{equation}
where
\begin{equation}\label{eq:stationaryE}
	E_{st} = \omega_0\frac{ \gamma   e^{- \beta_0 \omega_0} + 2 \Omega^2\Delta t}{\gamma  \left(1 + e^{-\beta _0 \omega _0}\right)+4 \Omega^2 \Delta t }.
\end{equation}

Substituting Eq.~\eqref{eq:stationaryE} into Eq.~\eqref{eq:betaE} we have
\begin{equation*}
	\beta_{st} = \beta_0 - \frac{1}{\omega_0} \ln  \frac{  1 + \frac{2 \Omega^2\Delta t}{\gamma} e^{\beta_0 \omega_0}}{1 + \frac{2 \Omega^2\Delta t}{\gamma}}.
\end{equation*}
And
\begin{equation*}
	\tau =(\gamma  \left(1 + e^{-\beta _0 \omega _0}\right)+4 \Omega^2 \Delta t )^{-1}
\end{equation*}
in the exponential function in  Eq.~\eqref{eq:Et} can be considered as time of measurement of the temperature of the external environment. Thus, in presence of the external field the temperature measurement becomes less precise but faster. So the trade off between this factors should be taken into account in experiments.

We emphasize that the linearity of the Gibbs ansatz \eqref{eq:linearAnsatz}, which lead to linearity of Eq.~\eqref{eq:qubitSecOrd} is specific for the two-level system \cite[Sec. 4]{Meretukov2024}.

Let us also consider  an example of a generic multi-level weak coupling limit type generator \cite{Accardi2002}. Unlike the qubit system we do not introduce the external classical field, but even in this case the equation is highly nonlinear. It has the following form
\begin{equation*}
	\mathcal{L}(\rho) = - i \left[ \sum_j( \omega_j + \Delta \omega_j) |j\rangle \langle j|, \rho \right] + \sum_{ij} \gamma_{ij} \left(|i \rangle \langle j| \rho |j\rangle \langle i| - \frac12 \{ |j \rangle \langle j|,\rho \} \right),
\end{equation*}
where $|j\rangle, j= 1, \ldots, d$ is the eigenbasis the system Hamiltonain, $\gamma_{ij} \geqslant 0$. We consider the Gibbs ansatz Eq.~\eqref{eq:GibbsAns} with
\begin{equation*}
	H = \sum_j  \omega_j  |j\rangle \langle j|.
\end{equation*}

Then  $\langle  A \rangle_{E}$ and $\langle  B\rangle_{E}$ take the form:
\begin{align*}
	\langle  A \rangle_{E(\beta)} &=  \frac{1}{Z( \beta)} \sum_{ij}  \omega_i (\gamma_{ij}  e^{- \beta \omega_j} -  \gamma_{ji}  e^{- \beta \omega_i}),  \\
	\langle  B\rangle_{E(\beta)} 
	&= \frac{1}{Z( \beta)}\sum_{ikj} ( e^{- \beta \omega_j}\gamma_{ij} -  e^{- \beta \omega_i}\gamma_{ji}) \gamma_{ki}  ( \omega_k  -  \omega_i) .
\end{align*}
Under detailed balance conditions \cite{Alicki1976, Fagnola2010} we have
\begin{equation*}
	\gamma_{ji} =  \gamma_{ij} e^{- \beta_0(\omega_i - \omega_j)},
\end{equation*}
so then we obtain
\begin{align*}
	\langle  A \rangle_{E(\beta)} &=  \frac{1}{Z( \beta)} \sum_{ij}  \omega_i  e^{- \beta \omega_j} \gamma_{ij}(1  -    e^{- (\beta - \beta_0)(\omega_i - \omega_j)})  ,
	\\
	\langle  B\rangle_{E(\beta)} 
	&= \frac{1}{Z( \beta)}\sum_{ikj}  e^{- \beta \omega_j} \gamma_{ij}(1  -    e^{- (\beta - \beta_0)(\omega_i - \omega_j)} )\gamma_{ki}  ( \omega_k  -  \omega_i) .
\end{align*}
In this form they can be they can be substituted into Eq~\eqref{eq:invTempSecOrd} and it is easily seen that $\beta_{st} =  \beta_0$ is its  stationary solutions, so it can model  asymptotically exact indirect measurement of the temperature. 

\section{Conclusions}

In this work we have developed a very general framework describing stroboscopic limit of repeated measurements for arbitrary ansatzes. Similarly to weak coupling limit ansatz-based master equations the most instant area of application of such an approach is quantum thermodynamics.  In the same sense that for the weak coupling case the Gibbs ansatz-based approach is called ''thermodynamical approach to open quantum systems'' \cite{Semin2020, Meretukov2024}, we can call its stroboscopic counterpart developed here ''thermodynamical  approach to repeated measurements''. So we have applied our framework to termometry, because the temperature is one of the most paradigmatic ''thermodynamical'' observable, which have no obvious place in the usual framework of von Neumann measurements. Nevertheless, we tried not only to achieve this specific goal, but to reveal the deep similarities between different approaches. These similarities become much clearer in the general framework and hidden if one focus on its specific realizations.

As a possible direction for further studies, we should mention the comparison of our ansatz-based approach with those based on parameter estimation \cite{Albarelli2023Invasiveness, DePasquale2017Sequential}. It would also be interesting to combine our approach with collision model derivation \cite{Campbell2021, Ciccarello2022} with total unitary dynamics of the system and the colliding baths, rather than just postulating GKSL dynamics of the system. This should still fit our general framework, but it also allows considering the regimes, where one measures the system temperature faster than its dynamics becomes Markovian. Another natural direction for further studies is the calculation of  the memory tensors \cite{Teretenkov2023} defining the multi-time correlation functions. The generalization of the present results to the infinite-dimensional dynamics with unbounded generator similar to the one recently done  \cite{Lopatin2025} for the usual weak cooling limit also, will be interesting.

\end{document}